  \providecommand\BibTeX{{%
    \normalfont B\kern-0.5em{\scshape i\kern-0.25em b}\kern-0.8em\TeX}}}
\definecolor{ocre}{RGB}{243,102,25} 
\DeclareMathAlphabet{\mathcal}{OMS}{cmsy}{m}{n}
\newcommand{\hussein}[1]{\textcolor{black}{#1}}
\newcommand{\reals}{{\mathbb{R}}}
\newcommand{\naturals}{{\mathbb{N}}} 
\newcommand{\dom}{\relax\ifmmode {\mathit{dom}} \else ${\sf dom}$\fi}
\newcommand\blfootnote[1]{
    \begingroup
    \renewcommand\thefootnote{}\footnote{#1}
    \addtocounter{footnote}{-1}
    \endgroup
}
\newcounter{theorems}
\newtheorem{theorem}[theorems]{Theorem}
 \newcounter{remarks}
 \newcounter{assumptions}
\newtheorem{remark}[remarks]{Remark}
\newtheorem{assumption}[assumptions]{Assumption} 
 \newcounter{definitions}
 \newtheorem{definition}[definitions]{Definition}
\newtheorem*{notation}{Notation}
\newcommand\nocaption{%
    \renewcommand\p@subfigure{}
    \renewcommand\thesubfigure{\thefigure\alph{subfigure}}
}
\title{Safe Decentralized Multi-Agent Control using Black-Box Predictors, Conformal Decision Policies, and Control Barrier Functions}
\author{
	Sacha Huriot and Hussein Sibai \\
	Computer Science and Engineering Department\\
	Washington University in St. Louis \\
	\texttt{\{h.sacha,sibai\}@wustl.edu}
}
\begin{document}

\maketitle

\begin{abstract}

We address the challenge of safe control in decentralized multi-agent robotic settings, where agents use uncertain black-box models to predict other agents' trajectories. We use the recently proposed conformal decision theory to adapt the restrictiveness of control barrier functions-based safety constraints based on observed prediction errors. We use these constraints to synthesize  controllers that balance between the objectives of safety and task accomplishment, despite the prediction errors. 
We provide an upper bound on the average over time of the value of a monotonic function of the difference between the safety constraint based on the predicted trajectories and the constraint based on the ground truth ones. We validate our theory through experimental results showing the performance of our  controllers when navigating a robot in the multi-agent scenes in the Stanford Drone Dataset.\blfootnote{This project was supported by the NSF CPS award No. 2403758.}
\end{abstract}
\section{Introduction}
Control in decentralized multi-agent settings is a fundamental problem with abundant applications in various domains, e.g., autonomous vehicles~\cite{10.1145/3582576}, power grids~\cite{SAMADI2020106211}, and manufacturing~\cite{9926465}. Because of the absence of communication between agents in such settings, an ego agent has to rely on sensing other agents' states and predicting their behaviors to plan its own control and achieve its objectives.
The problems of obtaining accurate state estimates and predictions have received enormous research attention that produced a myriad of successful methods and tools, particularly those adopting recent deep learning approaches~\cite{8643440, GAN, li2020gripplus,pedestrian_trajectory_prediction_using_CNNs_2022}. However, despite progress, prediction and perception are often susceptible to errors, and are expected to remain so. Thus, it is of utmost importance to design controllers that are aware of such errors and adapt their decisions accordingly.

Safety is an essential specification of almost any multi-agent system. It represents the requirement of avoiding states that the user deems unsafe, e.g., collisions. A prominent approach to guaranteeing safety of dynamical systems is using control barrier functions (CBFs) to filter out safety-violating nominal controls~\cite{CBF_based_quadratic_programs_2017_AaronAmes}. The super-level set of a CBF of a control system is {\em forward invariant}, i.e., all trajectories starting from this set remain inside it, if the controller always satisfies the CBF constraint. When the super-level set is disjoint from the set of states deemed unsafe, safety is implied. However, when a CBF is defined over the combined state of two agents, e.g., to specify their collision avoidance, the instantaneous dynamics of both agents have to be known for either agent to filter out its unsafe controls. In the absence of communication, an ego agent would have to resort to predicting the other agent's dynamics to estimate the corresponding CBF constraint. This might lead to unsafe controls being perceived as safe, and then, to safety violations.

Recently, conformal decision theory (CDT) has been proposed as a general framework for control using the uncertain outputs of black-box predictors~\cite{lekeufack2024conformal}. 
It is inspired by the theory of conformal prediction. The latter has proven to be an effective tool for distribution-free uncertainty quantification for black-box models~\cite{conformal_prediction_original_gammerman_1998,conformal_prediction_intro_anastasios_2023,conformal_decision_making_vovk_2018,adaptive_conformal_inference_under_distribution_shift_gibbs_neurips_2021}. It has gained increased interest in recent years with the rise of deep learning models deployed in critical settings~\cite{dixit2022adaptive,safety_assurance_using_conformal_prediction_2023}. CDT relaxes all statistical assumptions made in the conformal prediction literature, which do not necessarily hold in decision-making scenarios where data are inherently dependent. The guarantees provided by CDT hold even in adversarial settings, where predictions are intentionally made to worsen the performance of the controller.
CDT introduces a variable, called a {\em conformal variable}, that parameterizes control policies. It assumes that tuning that variable alters the conservatism of the followed policy. It thus tunes it  based on a user-defined {\em loss function}, quantifying how prediction errors affect performance and safety, rather than building and tuning prediction sets as we do when using conformal prediction. It assumes that there exists a conservative enough policy that would decrease the average value of the loss below a predefined threshold if followed at least for a predefined horizon, and it provides an upper bound on that average loss.

In this paper, we follow the CDT approach in addressing the challenge of control in decentralized multi-agent setting with collision avoidance safety specifications. We assume that an ego agent is equipped with a black-box predictor that estimates surrounding agents' trajectories periodically. At the end of a period, its sensors capture  the actual trajectories of these agents. We define CBF-based collision avoidance constraints that are strengthened or loosened based on observed prediction errors. We tune the constraints by updating the value of the conformal variable, that is introduced in an additive manner in the CBF constraint. We obtain an upper bound on the average of a monotonic function of the difference between the CBF constraints based on predicted and ground truth states, respectively. We also present experimental results validating our approach in maintaining the safety of a robot navigating different scenes in the Stanford Drone Dataset (SDD)~\cite{SDD_2016} while using a predictor to estimate the trajectories of  surrounding pedestrians.    
\subsection{Related Work}
\label{sec:RelatedWork}
Online adaptation of CBFs has been explored to improve the probability of safety in uncertain environments \cite{Risk-Aware-CBF}, to navigate and interact with environments that fixed CBF cannot do safely \cite{onlineCBFdecenter}, and for efficient task accomplishment while maintaining safety in multi-agent settings by estimating trust~\cite{rate_tune_CBF}. Conformal prediction has been used for online adaptation of a learned control policy to out-of-distribution states~\cite{huang2023conformal}, and for safe motion planning in multi-agent settings to define uncertainty regions around predictions~\cite{dixit2022adaptive}.

\section{Preliminaries}
\label{sec:Preliminaries}

\begin{notation}
    We denote the set of positive integers by $\naturals^{>0}$ and the set of non-negative reals by $\reals^{\geq0}$. We will use $t\in\reals^{\geq0}$ and $k\in\naturals^{>0}$ for continuous and discrete time instances, respectively. A function $\alpha\colon (-b,a)\rightarrow\reals$, for some $a$ and $b > 0$, belongs to the extended class-$\mathcal{K}$ if it is continuous, strictly increasing, and $\alpha(0)=0$~\cite{khalil2002nonlinear}. For any function $h:X\to\reals$, and any $r \in \reals$, we denote the $r$-super level set of $h$ by $h_{\geq r} := \{x\in X\mid h(x) \geq r\}$. Additionally, we denote the boundary of $h_{\geq r}$ by $h_{=r}$ and its interior by $h_{>r}$.
\end{notation}
Throughout the paper, we consider systems with nonlinear affine control dynamics of the form:
\begin{align}
    \dot{x} =f(x)+g(x) u,\label{eq:dynamics}
\end{align}
where $f:\reals^n\rightarrow\reals^n$ and $g:\reals^n\rightarrow\reals^{n\times m}$ are locally Lipschitz continuous and $u:\reals^{\geq0}\rightarrow U \subset \reals^m$ is a piecewise-continuous function. We assume that the dynamics are forward complete, ensuring the existence of solutions globally in time.

\subsection{Quadratic programming for CBF-based safe control}

\begin{definition}(Zeroing control barrier function)
    A differentiable function $h:\reals^n \rightarrow\reals$ is called a zeroing control barrier function (CBF) for system~(\ref{eq:dynamics}) if there exists a locally Lipschitz extended class-$\mathcal{K}$ function $\alpha$ such that for some
    super-level set $\mathcal{D} := h_{\geq 0}$, with $c >0$, 
    for any $x\in \mathcal{D}$,
    \begin{align}
        \sup_{u\in U} [\mathcal{L}_f h(x) + \mathcal{L}_g h(x) u] \geq -\alpha(h(x)),\label{eq:cbf_def}
    \end{align}
    where $\mathcal{L}_f h(x) := \frac{\partial h}{\partial x}f(x)$ and $\mathcal{L}_g h(x) := \frac{\partial h}{\partial x}g(x)$ are the Lie derivatives of $h$ w.r.t. the dynamics.
    
\end{definition}
An implication of condition~(\ref{eq:cbf_def}) is that $\forall x \in h_{=0}, \sup_{u\in U} \dot{h}(x)\geq0$. Consequently,
the super-level set $h_{\geq 0}$ of $h$ is forward invariant by Nagumo’s theorem~\cite{nagumo_thrm}. Additionally, $h_{\geq 0}$ is locally asymptotically stable over $\mathcal{D}$ and the function  $V:\mathcal{D} \rightarrow \reals^{\geq 0}$ that is equal to $-h(x)$ over $\mathcal{D} \backslash h_{\geq 0}$ and equal to zero in $h_{\geq 0}$ is a corresponding Lyapunov function~\cite{robust_CBF_IFAC_2015}. Thus, all sub-level sets of $V$ in $\mathcal{D}$, or equivalently, all super-level sets of $h$ in $\mathcal{D}$ with negative levels, are forward invariant. This is not necessarily the case for super-level sets of $h$ with positive levels.
A CBF $h$ for system~(\ref{eq:dynamics}) serves as a safety filter, altering nominal controls to guarantee safety. At any state $x \in \mathcal{D}$, a common objective is to find the closest control $u \in U$ that satisfies condition~(\ref{eq:cbf_def}) to some reference control $u_{ref}$ in terms of euclidean distance.  
Since the objective is quadratic and the constraint represented by (\ref{eq:cbf_def}) is affine in $u$, the resulting optimization problem can be formulated as a {\em Quadratic Programming} (QP) one:
\begin{align}
    u_{QP}(x)\ &:=\ \underset{u\in U}{\mathrm{argmin}}\ \Vert u - u_{ref}(x,t) \Vert^2\label{eq:CBF_QP}\\
    &\text{s.t. }\ \mathcal{L}_f h(x) + \mathcal{L}_g h(x)  u+\alpha(h(x))\geq0.\notag
\end{align}

We say that an instance of a QP problem is {\em feasible} if the set of controls that satisfy the constraint in (\ref{eq:CBF_QP}) is not empty. 
When $U =\reals^m$ and $\forall x\in\mathcal{D},\ \mathcal{L}_gh(x)\neq0$, (\ref{eq:CBF_QP}) has a locally Lipschitz continuous closed-form solution $u_{QP}:\mathcal{D}\rightarrow\reals^m$~\cite{Comparison_CBF_APF_2020_AaronAmes}.

\subsection{Conformal decision theory}
In CDT, the decision-making agent is able to observe the ground truth of the predictions in a delayed fashion. At step $k+1$, it is able to observe the ground truth at step $k$. That would allow it to update its decision-making based on the difference between the observed ground truth and the prediction. Thus, designing controllers based on this theory requires a loss function that quantifies the quality of the decision after observing the ground truth, an update rule of the conformal variable based on the loss, and a family of controllers parameterized by the conformal variable. We describe them more formally next.
\begin{definition}The components of controllers based on CDT~\cite{lekeufack2024conformal}, called {\em conformal controllers},  are:
    \begin{itemize}
    \item an input space $\mathcal{X}$ to the controller (e.g., ego agent's state and the predictions of the instantaneous dynamics of surrounding agents),
    \item an action space $\mathcal{U}$ of the controller (e.g., the actuation space $U$ of system~(\ref{eq:dynamics})),
    \item a ground truth space $\mathcal{Y}$ (usually equal to $\mathcal{X}$), 
     \item a conformal variable $\lambda \in \reals$, which is updated  at every discrete time step according to the loss function values at previous steps,
     \item a loss function $L: \mathcal{U} \times \mathcal{Y} \rightarrow[0,1]$ that quantifies the quality of a decision in $\mathcal{U}$ based on a prediction after observing its ground truth in $\mathcal{Y}$, and 
     
    \item a family $\{D_k\}_{k\in \naturals}$ of feedback controllers, or {\em decision functions},  available for the agent at each time step of the form $D_k := \{ D_k^\lambda : \mathcal{X}\rightarrow \mathcal{U}\ |\ \lambda \in \reals \}$, where one should think of $D_k^{\lambda_1}$ as more conservative than $D_k^{\lambda_2}$, i.e., more likely to result in a smaller loss, when $\lambda_1 < \lambda_2$. 
        
    \end{itemize}
\end{definition}
Tuning $\lambda$ based on the loss plays the same role as that of  tuning the size of the prediction sets in conformal prediction for time series~\cite{adaptive_conformal_inference_under_distribution_shift_gibbs_neurips_2021,conf_pred_time_series}. To be able to provide guarantees on the performance of a conformal controller under prediction errors, one has to assume the existence of decision functions that are conservative enough to drive the loss down. In \cite{lekeufack2024conformal}, the authors assume the existence of such functions that can drive the average loss over some pre-defined time horizon below a user-defined threshold over all possible sequences of pairs of predictions and ground truths. This is formulated in the following definition. 

\begin{definition}(Eventually safe conformal controller) \label{def:eventually_safe}
    A conformal controller is eventually safe if $\exists \epsilon_\mathit{safe}\in[0,1],\ \lambda_\mathit{safe}\in\reals$, and a time horizon $K \in \naturals^{>0}$, such that we have, uniformly over all sequences $\lambda_1,\dots,\lambda_K$ and $(x_1,y_1),\dots,(x_K,y_K)$:
    \begin{align*}
        \left(\forall k\in[K],\ \lambda_k\leq\lambda_\mathit{safe}\right)\implies
        \frac{1}{K}\sum_{k=1}^{K}L(D_k^{\lambda_k}(x_k),y_k)&\leq\epsilon_\mathit{safe}.
    \end{align*}
\end{definition}
When such an assumption is satisfied, an update rule for the conformal variable can be derived to achieve a {\em long-term bound on the risk}, i.e., an upper bound on the average loss over large enough time horizons~\cite{lekeufack2024conformal}. This is formalized in the following theorem.
\begin{theorem}(Long-term risk bound~\cite{lekeufack2024conformal}) \label{thm:conformal_bound}
    Fix a user-defined $\epsilon\in [0,1]$, a {\em learning rate} $\eta \in \reals^{>0}$, an eventually safe conformal controller, and consider the following update rule for the conformal control variable:
    $\forall k\in\naturals^{>0},\ \lambda_{k+1}=\lambda_k +\eta(\epsilon-l_k)$,
    where $l_k=L(D_k^{\lambda_k}(x_k),y_k)$.
    If $\lambda_1\geq\lambda_\mathit{safe}-\eta$ and $\epsilon_\mathit{safe}\leq\epsilon$, then for any realization of the data, i.e., for any possible sequence of pairs of predictions and ground truths $(x_1,y_1),(x_2,y_2)\dots$, the average loss is bounded as follows:
    $$\forall K'\geq K,\ \frac{1}{K'}\sum_{k=1}^{K'}l_k\leq\epsilon+\frac{\lambda_1-\lambda_\mathit{safe}+K\eta}{\eta\cdot K'}.$$ Hence, it results in an $\epsilon+o(1)$ average loss in the long term, where $o(1)$ converges to zero as the horizon $K'$ increases.
\end{theorem}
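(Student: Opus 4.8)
The plan is to exploit the fact that the update rule $\lambda_{k+1}=\lambda_k+\eta(\epsilon-l_k)$ telescopes. Summing it from $k=1$ to $K'$ gives $\lambda_{K'+1}=\lambda_1+\eta\sum_{k=1}^{K'}(\epsilon-l_k)$, which rearranges to the identity $\frac{1}{K'}\sum_{k=1}^{K'}l_k=\epsilon+\frac{\lambda_1-\lambda_{K'+1}}{\eta K'}$. Thus the whole theorem collapses onto a single task: producing a \emph{lower} bound on the running conformal variable. The exact right-hand side claimed in the statement falls out of the bound $\lambda_{K'+1}\geq\lambda_\mathit{safe}-K\eta$, so I would first record this algebraic identity and then spend the rest of the proof establishing a uniform lower bound $\lambda_k\geq\lambda_\mathit{safe}-K\eta$ valid for all $k\in\naturals^{>0}$.

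The heart of the proof is that lower bound, which I would prove by a minimal-counterexample argument: let $j$ be the least index with $\lambda_j<\lambda_\mathit{safe}-K\eta$. Two elementary facts drive the contradiction. First, because $l_k\in[0,1]$ and $\epsilon\geq0$, a single step decreases $\lambda$ by at most $\eta$, i.e. $\lambda_{k+1}\geq\lambda_k-\eta$, and hence $\lambda_j\geq\lambda_i-(j-i)\eta$ for any $i<j$; combined with the hypothesis $\lambda_1\geq\lambda_\mathit{safe}-\eta$ this immediately rules out $j\leq K$, so $j\geq K+1$ and the window $\{j-K,\dots,j-1\}$ is well defined. Second, for every $i$ in that window the same bounded-decrease fact yields $\lambda_i\leq\lambda_j+(j-i)\eta<\lambda_\mathit{safe}$, using $j-i\leq K$. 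Since every conformal variable over this length-$K$ window lies below $\lambda_\mathit{safe}$, I invoke the eventually-safe property (Definition~\ref{def:eventually_safe}) on this window to get $\frac{1}{K}\sum_{i=j-K}^{j-1}l_i\leq\epsilon_\mathit{safe}\leq\epsilon$, equivalently $\sum_{i=j-K}^{j-1}(\epsilon-l_i)\geq0$. Then $\lambda_j=\lambda_{j-K}+\eta\sum_{i=j-K}^{j-1}(\epsilon-l_i)\geq\lambda_{j-K}\geq\lambda_\mathit{safe}-K\eta$ by minimality of $j$, contradicting $\lambda_j<\lambda_\mathit{safe}-K\eta$. Substituting $\lambda_{K'+1}\geq\lambda_\mathit{safe}-K\eta$ into the telescoped identity gives the bound for every $K'\geq K$, and the $\epsilon+o(1)$ conclusion follows by letting $K'\to\infty$.

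The main obstacle, and the only genuinely nonroutine step, is exactly this lower bound on $\lambda_k$: it is what converts the qualitative ``eventually safe'' assumption into a hard quantitative guarantee, and it encodes the self-correcting mechanism whereby a small $\lambda$ forces a small loss, which makes $\epsilon-l_k\geq0$ and pushes $\lambda$ back up before it can drift away. The delicate point I would flag is that Definition~\ref{def:eventually_safe} is written for a single block of $K$ steps, whereas I must apply it to an arbitrary window ending just before $\lambda$ would dip too low; I would therefore make explicit that the property is intended to hold uniformly over every window of $K$ consecutive steps, as its ``uniformly over all sequences'' quantifier suggests. Finally, obtaining the precise constant $K\eta$ (rather than some smaller multiple of $\eta$) hinges on pairing the window length $K$ with the per-step decrease bound $\eta$, so I would track the inequality $j-i\leq K$ carefully throughout.
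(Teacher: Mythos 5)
Your proof is correct, and it is essentially the argument from the original CDT paper. A caveat on the comparison: this paper only cites Theorem~\ref{thm:conformal_bound} and does not prove it; what it does prove is its own analogous long-term bound for the CBF-based controller, using the same two-part skeleton you use --- telescope the update rule into the identity $\frac{1}{K'}\sum_{k=1}^{K'}l_k=\epsilon+\frac{\lambda_1-\lambda_{K'+1}}{\eta K'}$, then lower-bound $\lambda_{K'+1}$ by a first-crossing contradiction. The difference is the window length: the paper's ``rapidly safe'' theorem gives a per-step guarantee ($\lambda_{k_*-1}\leq\lambda_\mathit{safe}$ forces $l_{k_*-1}\leq\epsilon_\mathit{safe}$ immediately), so its contradiction needs only one step and yields the lower bound $\lambda_k\geq\lambda_\mathit{safe}-\eta$ and the $+\eta$ numerator; your proof handles the genuinely weaker eventually-safe hypothesis with horizon $K$, which forces the length-$K$ window argument (ruling out $j\leq K$ via $\lambda_1\geq\lambda_\mathit{safe}-\eta$ and the per-step decrease bound $\lambda_{k+1}\geq\lambda_k-\eta$, showing all $\lambda_i<\lambda_\mathit{safe}$ on $\{j-K,\dots,j-1\}$, then summing $K$ updates) and yields the matching $+K\eta$ numerator. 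All the steps check out, including the constant bookkeeping $j-i\leq K$, and your bound in fact holds for all $K'\geq 1$, slightly stronger than the stated $K'\geq K$. Your flagged caveat is also exactly right and worth making explicit: Definition~\ref{def:eventually_safe} as written quantifies over a single block indexed $1,\dots,K$ with decision families $D_1,\dots,D_K$, whereas the contradiction applies it to the shifted window $D_{j-K},\dots,D_{j-1}$; the property must be read as holding uniformly over every window of $K$ consecutive steps, which is how the original paper uses it and which is automatic when the decision family is time-invariant, as for the CBF-QP controller in this paper.
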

\section{Problem setup}
\label{sec:ProblemSetup}
We consider a decentralized multi-agent setting, where the ego agent $i$, with known dynamics (\ref{eq:dynamics}), attempts to complete an objective while respecting safety.
The objective of agent $i$ is encoded by a reference controller $u_{ref}$ that might not be necessarily safe and end up leading to collisions. For example, one might consider a reference controller that satisfies the constraints of a control Lyapunov function (CLF), as in \cite{CBF_theory}, to stabilize to a certain equilibrium.
We will augment $u_{ref}$ with a safety filter based on CBFs.

\subsection{Pairwise collision avoidance}

We assume without loss of generality that all agents have the same state space $\reals^n$ and that a zeroing CBF $h:\reals^n \times \reals^n \rightarrow\reals$ that defines pairwise collision avoidance is user-provided. For each agent $j$, we define $h_j:=h(x_i,x_j)$.
We define its associated barrier condition as a boolean function, or a constraint, $C_j:\reals^n \times \reals^n \times\reals^m\times\reals^n\rightarrow\{\top, \bot\}$ that represents the satisfaction of the inequality $\dot{h}(x_i,x_j) \geq -\alpha(h(x_i, x_j))$.
We decompose $\dot{h}_j$ using the chain rule as follows:
\begin{align}
    C_j(x_i,x_j,u_i,\dot{x}_j)\ 
    \equiv\ \frac{\partial h_j}{\partial x_i}\dot{x}_i + \frac{\partial h_j}{\partial x_j}\dot{x}_j + \alpha(h_j) \geq 0.\label{eq:cbf_decomp}
\end{align}
\begin{remark}
        We use $u_i$ instead of $\dot{x}_i$ in the definition of $C_j$ since we know agent $i$'s dynamics, in contrast with that of agent $j$, and we need it to explicitly appear as we will  use $C_j$ later to solve for a $u_i\in\reals^m$ that satisfies it.
\end{remark}
To be able to bound the effect of prediction errors on safety later in the paper, we will need the following assumption.  
   \begin{assumption}\label{asm:derivative_bound}
The norm of the partial derivative $\frac{\partial h_j}{\partial x_j}$
  is globally bounded by $M_h$, and $\alpha$ is globally Lipschitz continuous with Lipschitz constant $M_\alpha$.
       \end{assumption}
\subsection{Sensors and black-box trajectory predictors}
In our decentralized setting, agent $i$ does not have instantaneous knowledge of the other agents' states and dynamics,
on which its safety constraints depend. It only has such knowledge of its own state and dynamics.
Instead, agent $i$'s sensor can, every $\tau$ seconds, accurately sample the trajectories of the agents within its range, that we call the {\em sensed agents}. 
This setting implicitly assumes that, if an agent $j$ is not observed by the sensor, then it is far enough so that the safety constraint $C_j$ can be ignored.
Precisely, at any time instant $(k+1)\tau$, the sensor returns the trajectories $\xi_j:I_k\to\reals^n$ of the sensed agents\footnote{We assume without loss of generality that all surrounding agents have been in the range in the last $\tau$ seconds.} over the time interval $I_k=[k\tau, (k+1)\tau)$.
The ego agent is equipped with a {\em black-box predictor} that estimates the agents' trajectories for the next $\tau$ seconds, i.e., returning $\hat\xi_j:I_{k}\to\reals^n$ at time instant $k\tau$. Trajectory prediction has been an active research area in robotics and control~\cite{deeplearning_predictor_multiagent,multi_rl_network,Trajectory_Multiagent_Distributed,Amirloo2022LatentFormerMT,Linear_predictors}. The ego agent then estimates $\dot\xi_j(t)$, at any sensed agent $j$ and $t \in [k\tau,(k+1)\tau)$, from its  predicted trajectory. 
At the end of each interval, it uses the sensor's output to compare the past predictions against the ground truth to measure the predictor's error.
Again, to be able to guarantee the existence of a controller that is conservative enough to achieve safety despite prediction errors, we will assume the existence of, possibly unknown, upper bounds on the predictor's errors in terms of both values and time derivatives. 
This assumption holds in many common applications, e.g., when the agents have a bounded speed in a closed environment.
\begin{assumption}(Bounded prediction error) \label{asm:predictor}
    There exists a {\em value bound} $E_v$ and a {\em dynamics bound} $E_d$, such that, for all $k \in \naturals$, and for each sensed agent $j$, the predicted trajectory $\hat{\xi}_j$ satisfies 
        $\sup_{t\in I_k}\left\Vert\hat{\xi}_j-\xi_j\right\Vert\leq E_v$ and $\sup_{t\in I_k}\left\Vert\hat{\dot\xi}_j-\dot\xi_j\right\Vert\leq E_d$.  
\end{assumption}

\noindent {\bf Problem statement:}
        Given a zeroing CBF $h$ that encodes pairwise collision avoidance, a sensor that observes neighboring agents' trajectories in periods of  $\tau$ seconds,
        an unreliable predictor of their trajectories, a reference controller that does not necessarily guarantee safety, design a controller that  adapts its safety constraints  based on prediction errors while following its reference controller.

\section{Algorithm}
\label{sec:Algorithm}
In this section, we present our proposed solution for the problem above. It is an algorithm that adapts the ego agent's CBF constraints using CDT \cite{lekeufack2024conformal}.
It does so by using the conformal variable as a slack variable in the pairwise collision avoidance CBF constraints. 

\subsection{CBF-based conformal controller}

    We build a conformal controller based on control barrier functions, as shown in Figure~\ref{fig:algo_diagram}. Its components are as follows:  
\begin{itemize}
    \item the input space $\mathcal{X}$ to our controller is the set of all triples, where the first element belongs to the output space $\mathcal{X}_p$ of the trajectory predictor, i.e., the set of all finite sets of possible predicted trajectories over an interval of $\tau$ seconds of a finite number of sensed agents, the second element is an instantaneous ego agent's state, and the third is a time instant $t$, 
    \item the action space $\mathcal{U}$ of the controller is the 
    control space of the ego agent $U$,
    \item the ground truth space $\mathcal{Y}$ is the set of all pairs of an actual trajectory of the ego agent of duration $\tau$ and a  set of actual trajectories of its sensed agents over the same interval,
         \item the loss function $L: \mathcal{X}_p \times \mathcal{Y} \times \reals \rightarrow(-1/2,1/2)$ is defined as follows:
          $L( \{\hat{\xi}_j\}_{j \in [N]}, (\xi_i, \{\xi_j\}_{j \in [N]}), \lambda)=\max_{j \in [N],  t \in [0,\tau)} s\left(\hat{\dot{h}}_j+\alpha(\hat{h}_j) +\lambda -\dot{h}_j - \alpha(h_j)\right)$,
     where we use $\hat{h}_j := h(x_i,\hat{x}_j)$, and $\hat{\dot{h}}_j:= \frac{\partial \hat{h}_j}{\partial x_i}\dot{x}_i+\frac{\partial \hat{h}_j}{\partial x_j}\hat{\dot{x}}_j$, with $x_j$ and $\hat{x}_j$ denoting $\xi_j(t)$ and $\hat{\xi}_j(t)$, $\lambda$ is the conformal variable, and $s:r\in\reals\rightarrow(-1/2,1/2)$ is an extended class-$\mathcal{K}$ function, in order to preserve the sign and variations of its argument, such as the function $r\mapsto\arctan(r)/\pi$,  
          \item the conformal variable $\lambda \in \reals$ is updated  in the $(k+1)^{\mathit{th}}$ sampling time, i.e., at $t = (k+1)\tau$, according to the equation $\lambda_{k+1}=\lambda_k +\eta(\epsilon-l_k)$, where $\eta$ is a constant learning rate, $\epsilon$ is the user-defined target average loss, and $l_k$ is the loss over $I_k:=[k\tau,(k+1)\tau)$, as in \cite{lekeufack2024conformal},  and 
         \item the set $\{D_k\}_{k\in \naturals}$ of families of feedback controllers is defined by Algorithm~\ref{code:conformal_controller}. 
                                    \end{itemize}

Our conformal controller update its conformal variable $\lambda$ every $\tau$ seconds, at the same time instants the sensor and the predictor update their respective outputs, as shown in Figure~\ref{fig:algo_diagram}.
In contrast, its  decision functions are determined by a CBF-based QP solver that generates controls instantaneously in time, as described in Algorithm~\ref{code:conformal_controller}\footnote{In practical settings, this means that we run the QP at a much higher frequency than the sensor and the predictor.}. 
The latter    
uses the predicted trajectories of neighboring agents in an arbitrary interval $[k\tau,(k+1)\tau)$ to estimate their instantaneous dynamics at any time instant within it. It then uses these estimates to check the CBF constraints that represent pairwise collision avoidance with surrounding agents. 
A simple replacement of $\dot{h}_j$ by $\hat{\dot{h}}_j$ and $\alpha(h_j)$ by $\alpha(\hat{h}_j)$ in $C_j$ might lead to violating $C_j$ because of prediction errors.
In particular, such a violation will happen if and only if $\hat{\dot{h}}_j + \alpha(\hat{h}_j) \geq 0>\dot{h}_j + \alpha(h_j)$.
Instead, in addition to that replacement, 
our conformal controller has 
$\lambda$, the conformal variable, added to the left hand side of the inequality, i.e., it uses $\hat{C}_j:(\reals^n)^2\times \reals^m\times \reals^n\times\reals\rightarrow\{\top, \bot\}$ as a replacement of $C_j$, which we call the {\em conformal constraint}, that we define as follows: $\hat{C}_j(x_i,\hat{x}_j,u_i,\hat{\dot{x}}_j,\lambda) \equiv$
\begin{align}
\label{eq:conformal_constraint}
\frac{\partial \hat{h}_j}{\partial x_i}\dot{x}_i + \frac{\partial \hat{h}_j}{\partial x_j}\hat{\dot{x}}_j+ \alpha(\hat{h}_j) + \lambda\geq0.
\end{align}
Hence, the value $\hat{\dot{h}}_j+ \alpha(\hat{h}_j) + \lambda- \dot{h}_j - \alpha(h_j)$ can serve as a measure of the effect of replacing the real  constraint $C_j$ with  $\hat{C}_j$ on safety.
It can be seen as the {\em gap} between the two constraints: if it's positive, it bounds the extent of possible constraint violation, and if it's negative, it lower bounds the extent of robust satisfaction of the real constraint. 
Updating $\lambda$ proportionally to such value allows adaptation of the constraints to prediction errors even when the chosen control satisfies $C_j$. This is what motivates us to choose the loss we defined earlier for our conformal controller. 

\begin{figure}[!h]
    \centering
    \includegraphics[width=\linewidth]{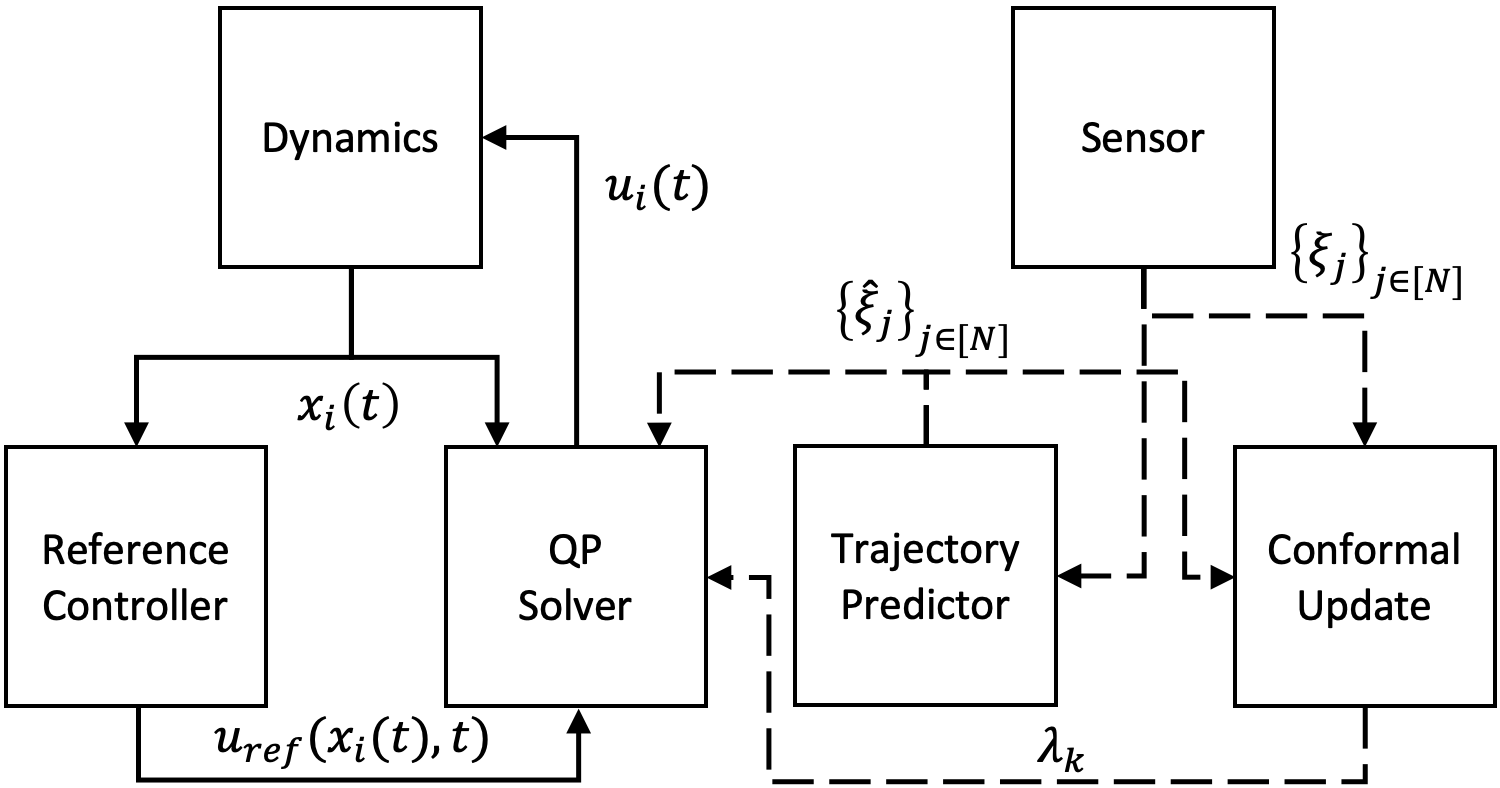}
    \caption{The reference controller and QP solver run in a feedback loop instantaneously (plain arrows), while the trajectory predictor and conformal update run periodically (dashed arrows). For $t\in I_k=[k\tau,(k+1)\tau)$, the control $u_i(t)$ is the minimal deviation $u$ from $u_{ref}(x_i(t),t)$ that satisfies the conformal safety constraints $\left\{\hat{C}_j(x_i,\xi_j(t),u,\hat{\dot{\xi}}_j(t),\lambda_k)\right\}_j$, where the predicted trajectories $\{\hat{\xi}_j\}_{j\in[N_{k}]}$ and the conformal parameter $\lambda_k$ were obtained at $k\tau$. To update the latter, we use the maximal prediction error between the ground truth trajectories $\{\xi_j\}_{j\in[N_{k-1}]}$ obtained from the sensor over $I_{k-1}$ and the predicted trajectories  at $(k-1)\tau$.
    }
    \label{fig:algo_diagram}
\end{figure}

A loss greater than the user-defined target average loss $\epsilon$ means that the constraint based on the current $\lambda$ and predicted trajectories is less conservative than needed. 
Hence, our conformal controller decreases $\lambda$ to make $\hat{C}_j$ stricter, and vice versa.
Since our loss can be negative, we also allow $\epsilon$ to be negative. 
If the user chooses $\epsilon>0$, our update of $\lambda$ based on our loss drives   $\hat{C}_j$ to be looser than $C_j$, and thus allows some violations of the latter, and agent $i$ can better follow its reference control.
If they chose an $\epsilon<0$, our update of $\lambda$ drives $\hat{C}_j$ to be stricter than $C_j$, and thus allows 
the satisfaction of $C_j$ more often on average.
Since over any time interval $I_k$, there are multiple constraints, one per nearby agent, and that Algorithm~\ref{code:conformal_controller} is ran based on the predictions made at $k\tau$ over the whole interval, our conformal controller updates $\lambda$ at $(k+1)\tau$ based on the worst case prediction error over both time and agents. That is the reason we have the maximum operator in our definition of the loss function. 

\subsection{Conformal CBF-based safe multi-agent control}

\begin{algorithm}
  \caption{Conformal CBF-based safe controller}
	\label{code:conformal_controller}
	\begin{algorithmic}[1]
		\State {\bf input:} $x_i$, $t \in I_k$, $\{\hat{\xi}_j\}_{j \in [N_k]}$,
    $u_{ref}$, 
    $\lambda$ 
		            \State $\{\hat{x}_j\}_{j \in [N_k]} \gets $ values of $\{\hat{\xi}_j\}_{j \in [N_k]}$ at $t$
    \State $\{\hat{\dot{x}}_j\}_{j \in [N_k]} \gets $ time derivatives of $\{\hat{\xi}_j\}_{j \in [N_k]}$ at $t$
    \State $\mathcal{C}\gets$ constraints set $\left\{\hat{C}_j(x_i,\hat{x}_j,u_i,\hat{\dot{x}}_j,\lambda)\right\}_{j \in [N_k]}$
    \State $u_i\gets \underset{u_i\in U}{\mathrm{argmin}}\ \Vert u_i-u_{ref}(x_i,t)\Vert^2$,\ under the constraint $\bigwedge\mathcal{C}$\label{ln:QP_conformal_controller}
             \State {\bf return } $u_i$
	\end{algorithmic}
\end{algorithm}

Algorithm \ref{code:conformal_controller} takes as input: the ego agent's current state $x_i$, the current time $t \in I_k$, the $N_k$ sensed agents' predicted trajectories $\{\hat{\xi}_j\}_{j \in [N_k]}$ in the interval $[k\tau,(k+1)\tau)$,
the ego agent's reference controller $u_{ref}$, and the conformal control variable $\lambda$.
The algorithm is assumed to be running and generating the control value that the agent should send to its actuators instantaneously.
It proceeds as follows: it computes the time derivatives of the predicted trajectories at the current time instant $t$ and then constructs the constraints $\{\hat{C}_j\}_{j \in [N_k]}$ as in (\ref{eq:conformal_constraint}).
Finally, it solves the QP optimization problem with these  constraints.
It returns a control value that is minimally distant, in Euclidean sense, from the reference control while satisfying the constraints. 
\par

\subsection{Feasibility}

Note that even if the QP problem under the constraints $\{C_j\}_{j \in [N_k]}$ was feasible, it may not be
under the constraints $\{\hat{C}_j\}_{j \in [N_k]}$. 
We will thus assume the following.
\begin{assumption}(Feasibility of the QP problem with an  additive term)
    \label{asm:feasibility}
The QP problem in line~\ref{ln:QP_conformal_controller} of Algorithm~\ref{code:conformal_controller} has a feasible solution.
                                  \end{assumption} 

One can relax this assumption by running Algorithm~\ref{code:conformal_controller} at $t \in I_k$ with larger values of $\lambda$ than the value $\lambda_k$ determined by our update function in parallel if it is not feasible and choose the one closest to $\lambda_k$ that makes it feasible. 
However, that might affect the guarantees of the conformal controller we present next.

\subsection{Conformal CBF guarantees}
As in \cite{lekeufack2024conformal}, our conformal controller is guaranteed to have its average loss upper bounded by  $\epsilon+o(1)$.
It means that, in the long term, $s^{-1}(\epsilon)$ is the average maximum gap
between the CBF constraints we use, $\hat{C}_j$, and the ones we want to satisfy, $C_j$. Before showing that, we prove that one can choose a $\lambda$ that makes $\hat{C}_j$ strict enough to drive the loss below a pre-determined threshold immediately.   

    \begin{theorem}(Rapidly safe for any target average loss)
    Under Assumptions \ref{asm:derivative_bound}, \ref{asm:predictor}, and \ref{asm:feasibility}, our conformal controller
satisfies the following: for any $\epsilon_\mathit{safe}\in s^{-1}(\reals)$, there exists a $\lambda_\mathit{safe}\in\reals$ such that 
    for all $\lambda\leq\lambda_\mathit{safe}$, 
    $(\{\hat{\xi}_j\}_{j \in [N]}, (\xi_i, \{\xi_j\}_{j \in [N]})) \in \mathcal{X}_p \times\mathcal{Y}$, 
   $$L(\{\hat{\xi}_j\}_{j \in [N]}, (\xi_i, \{\xi_j\}_{j \in [N]}),\lambda)\leq\epsilon_\mathit{safe}.$$
   \end{theorem}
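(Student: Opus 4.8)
The plan is to exploit the fact that the loss $L(\{\hat\xi_j\}_{j\in[N]},\{\xi_j\}_{j\in[N]},\lambda)$ does not depend on the chosen control, only on the given trajectories: it is the maximum over $j\in[N]$ and $t\in[0,\tau]$ of the extended class-$\kappa$ function $s$ evaluated at the \emph{gap} $\hat q_j+\alpha(\hat h_j)+\lambda-q_j-\alpha(h_j)$. Since $s$ is strictly increasing, it suffices to bound this gap \emph{uniformly} in $j$ and $t$ by $\lambda$ plus a constant determined by the assumptions, and then invert $s$. First I would rewrite the gap as $\lambda+(\hat q_j-q_j)+\bigl(\alpha(\hat h_j)-\alpha(h_j)\bigr)$ and bound the two error terms separately.

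For the \emph{dynamics} term, Assumption~\ref{asm:predictor} applies directly: reading $\hat q_j=\frac{\partial \hat h_j}{\partial x_j}\hat{\dot x}_j$ as $\frac{\partial h}{\partial x_j}(x_i,\hat x_j)\hat{\dot x}_j$ and $q_j=\frac{\partial h}{\partial x_j}(x_i,x_j)\dot x_j$, the assumed bound yields $|\hat q_j-q_j|\le E_d$ for every $j$ and every $t\in I_k$. For the \emph{value} term, I would first use the global Lipschitz constant $M_\alpha$ of $\alpha$ (Assumption~\ref{asm:derivative_bound}) to get $|\alpha(\hat h_j)-\alpha(h_j)|\le M_\alpha\,|\hat h_j-h_j|$, and then control $|\hat h_j-h_j|=|h(x_i,\hat x_j)-h(x_i,x_j)|$ by a mean-value argument along the segment joining $x_j$ to $\hat x_j$: since $\|\frac{\partial h_j}{\partial x_j}\|\le M_h$ globally (Assumption~\ref{asm:derivative_bound}) and $\|\hat x_j-x_j\|\le E_v$ (Assumption~\ref{asm:predictor}), one obtains $|\hat h_j-h_j|\le M_h E_v$ and hence $|\alpha(\hat h_j)-\alpha(h_j)|\le M_\alpha M_h E_v$.

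Combining these via the triangle inequality gives, uniformly in $j$ and $t$, the bound $\hat q_j+\alpha(\hat h_j)+\lambda-q_j-\alpha(h_j)\le \lambda+E_d+M_\alpha M_h E_v$. Because $s$ is monotone and the right-hand side is independent of $j$ and $t$, taking the maximum yields $L\le s(\lambda+E_d+M_\alpha M_h E_v)$. Finally, since $s$ is strictly increasing it is injective, so $s^{-1}$ is well-defined on the range of $s$, to which $\epsilon_\mathit{safe}$ belongs by hypothesis; I would therefore set $\lambda_\mathit{safe}:=s^{-1}(\epsilon_\mathit{safe})-E_d-M_\alpha M_h E_v$. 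Then every $\lambda\le\lambda_\mathit{safe}$ makes the argument of $s$ at most $s^{-1}(\epsilon_\mathit{safe})$, and monotonicity gives $L\le s\bigl(s^{-1}(\epsilon_\mathit{safe})\bigr)=\epsilon_\mathit{safe}$, as required.

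The only genuinely analytic step is the mean-value bound on $|\hat h_j-h_j|$; everything else is the triangle inequality and the monotonicity/invertibility of $s$. I expect the main subtleties to be purely bookkeeping: the uniformity over the continuum of times $t\in[0,\tau]$ and the finite agent set (which is immediate because $E_v,E_d,M_h,M_\alpha$ are all global constants), and matching the notation $\hat q_j$, $\frac{\partial \hat h_j}{\partial x_j}$ exactly to the quantity bounded in Assumption~\ref{asm:predictor}. Note that Assumption~\ref{asm:feasibility} does not enter the quantitative estimate itself; it is needed only so that Algorithm~\ref{code:conformal_controller} returns a well-defined control and the input trajectories to the loss are actually realizable.
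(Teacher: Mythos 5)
Your proposal is correct and follows essentially the same route as the paper's proof: the same decomposition of the gap into $(\hat q_j-q_j)$ and $(\alpha(\hat h_j)-\alpha(h_j))$, the same use of Assumptions~\ref{asm:derivative_bound} and~\ref{asm:predictor} to get the uniform bound $E_d+M_\alpha M_h E_v$, and the identical choice $\lambda_\mathit{safe}=s^{-1}(\epsilon_\mathit{safe})-E_d-M_\alpha M_h E_v$. The only differences are presentational --- you argue forward from a uniform bound on the gap while the paper chains backward through sufficient conditions from the maximizing $(j,t)$, and you make explicit the mean-value step $\lvert\hat h_j-h_j\rvert\leq M_h\Vert\hat x_j-x_j\Vert$ that the paper leaves implicit --- and your observation that Assumption~\ref{asm:feasibility} plays no quantitative role matches the paper's proof as well.
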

\begin{proof}
Given $\epsilon_\mathit{safe}\in s^{-1}(\reals)$, $(\{\hat{\xi}_j\}_{j \in [N]}, (\xi_i, \{\xi_j\}_{j \in [N]}))$ and $\lambda$, pick an agent $j\in [N]$ and an instant $t \in [0,\tau)$ which maximizes $s\left(\hat{\dot{h}}_j+\alpha(\hat{h}_j) +\lambda -\dot{h}_j - \alpha(h_j)\right)$.
 Then, $\quad L(\{\hat{\xi}_j\}_{j \in [N]}, (\xi_i, \{\xi_j\}_{j \in [N]}), \lambda) \leq\epsilon_\mathit{safe}$
\begin{align*}
    &\iff \hat{\dot{h}}_j+\alpha(\hat{h}_j) +\lambda -\dot{h}_j - \alpha(h_j) \leq s^{-1}(\epsilon_\mathit{safe})\\
        &\impliedby\lambda\leq s^{-1}(\epsilon_\mathit{safe})-\lvert\hat{\dot{h}}_j-\dot{h}_j\rvert - \lvert\alpha(\hat{h}_j)-\alpha(h_j)\rvert\\
        &\impliedby\lambda\leq s^{-1}(\epsilon_\mathit{safe})-M_hE_d- M_\alpha M_h\Vert\hat{x}_j- x_j\Vert\\
    &\impliedby\lambda\leq s^{-1}(\epsilon_\mathit{safe})- M_h(E_d + M_\alpha E_v).
\end{align*}
Hence, $\lambda_\mathit{safe}\leq s^{-1}(\epsilon_\mathit{safe})- M_h(E_d + M_\alpha E_v)$ is sufficient.
\end{proof}
We now present a similar guarantee to that of Theorem \ref{thm:conformal_bound}  that our conformal controller provides.
\begin{theorem}(Long-term risk bound) 
 Fix any $\epsilon_{\mathit{safe}}$ and consider Assumptions \ref{asm:derivative_bound}, \ref{asm:predictor}, and \ref{asm:feasibility}. Then, our conformal controller  satisfies the following: if $\lambda_1\geq\lambda_\mathit{safe}-\eta$, where $\lambda_1$ is the initial value of $\lambda$ chosen by our conformal controller, and $\epsilon_{safe}\leq\epsilon$,
  then for any realization of the data, i.e., for any sequence of true and predicted trajectories, the average loss satisfies:
    $$\frac{1}{K'}\sum_{k=1}^{K'}l_k\leq\epsilon+\frac{\lambda_1-\lambda_\mathit{safe}+\eta}{\eta\cdot K'},$$
    for all $K'\in\naturals^{>0}$, where $l_k$ is the loss at the $k^{\mathit{th}}$ sampling time.
\end{theorem}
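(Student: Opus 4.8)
The plan is to read this bound as the $K=1$ specialization of the general conformal-decision-theory guarantee in Theorem~\ref{thm:conformal_bound}, and to note that the preceding theorem (Rapidly safe for any target average loss) furnishes exactly the single-step safety that this specialization needs. That theorem is in fact stronger than the ``eventually safe'' property of Definition~\ref{def:eventually_safe}: for the fixed $\epsilon_\mathit{safe}$ (which, for $\lambda_\mathit{safe}$ to exist, must lie in the range $(-1/2,1/2)$ of $s$) it yields a $\lambda_\mathit{safe}$ with $\lambda\leq\lambda_\mathit{safe}\implies L(\cdot,\cdot,\lambda)\leq\epsilon_\mathit{safe}$ pointwise, for every input/ground-truth pair. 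Taking $K=1$ in Definition~\ref{def:eventually_safe}, our controller is thus eventually safe with horizon $1$, and substituting $K=1$ into the bound of Theorem~\ref{thm:conformal_bound} produces the claimed numerator $\lambda_1-\lambda_\mathit{safe}+\eta$ for all $K'\geq1$, that is, for all $K'\in\naturals^{>0}$.

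Since our loss ranges over $(-1/2,1/2)$ and $\epsilon$ may be negative, while Theorem~\ref{thm:conformal_bound} is stated for losses in $[0,1]$ and $\epsilon\in\reals^{\geq 0}$, I would make the argument self-contained by rerunning its telescoping proof with these ranges. Summing the update $\lambda_{k+1}=\lambda_k+\eta(\epsilon-l_k)$ over $k=1,\dots,K'$ collapses to $\lambda_{K'+1}-\lambda_1=\eta\bigl(K'\epsilon-\sum_{k=1}^{K'}l_k\bigr)$, hence
$$\frac{1}{K'}\sum_{k=1}^{K'}l_k=\epsilon+\frac{\lambda_1-\lambda_{K'+1}}{\eta\cdot K'}.$$
It then suffices to prove the single lower bound $\lambda_{K'+1}\geq\lambda_\mathit{safe}-\eta$, which turns this identity directly into the desired inequality.

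I would establish $\lambda_k\geq\lambda_\mathit{safe}-\eta$ for every $k$ by induction. The base case is the hypothesis $\lambda_1\geq\lambda_\mathit{safe}-\eta$. For the inductive step, assume $\lambda_k\geq\lambda_\mathit{safe}-\eta$ and split on the position of $\lambda_k$. If $\lambda_k\leq\lambda_\mathit{safe}$, the pointwise safety above gives $l_k\leq\epsilon_\mathit{safe}\leq\epsilon$, so $\epsilon-l_k\geq0$ and $\lambda_{k+1}\geq\lambda_k\geq\lambda_\mathit{safe}-\eta$. If instead $\lambda_k>\lambda_\mathit{safe}$, I bound the one-step decrease: because $l_k<1/2$ and $\epsilon\geq\epsilon_\mathit{safe}>-1/2$, we have $l_k-\epsilon<1$, whence $\lambda_{k+1}=\lambda_k+\eta(\epsilon-l_k)>\lambda_k-\eta>\lambda_\mathit{safe}-\eta$. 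In both cases the invariant is preserved, so $\lambda_{K'+1}\geq\lambda_\mathit{safe}-\eta$.

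The one genuinely delicate point, and the step I expect to be the main obstacle, is this one-step-decrease bound when $\lambda_k>\lambda_\mathit{safe}$: it is precisely where the loss range $(-1/2,1/2)$ together with the chained inequality $\epsilon\geq\epsilon_\mathit{safe}>-1/2$ is needed (playing the role of ``$l_k\in[0,1]$, $\epsilon\geq0$'' in the cited theorem), and where choosing $\epsilon$ below $-1/2$ would invalidate the estimate. The remainder is routine telescoping plus the direct invocation of the already-proved rapid-safety property.
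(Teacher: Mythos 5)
Your proposal is correct and follows essentially the same route as the paper: the same telescoping identity $\frac{1}{K'}\sum_{k=1}^{K'}l_k=\epsilon+\frac{\lambda_1-\lambda_{K'+1}}{\eta K'}$, reduced to the same key invariant $\lambda_{K'+1}\geq\lambda_\mathit{safe}-\eta$, which both arguments derive from the rapid-safety theorem ($\lambda_k\leq\lambda_\mathit{safe}\implies l_k\leq\epsilon_\mathit{safe}\leq\epsilon$) together with the one-step bound $|\lambda_{k+1}-\lambda_k|<\eta$ coming from the loss range $(-1/2,1/2)$. The only cosmetic difference is that you establish the invariant by forward induction with a case split on $\lambda_k\lessgtr\lambda_\mathit{safe}$, whereas the paper argues by contradiction at the first index crossing below $\lambda_\mathit{safe}-\eta$ --- logically the same argument, and your remark that only the one-sided bound $\epsilon\geq\epsilon_\mathit{safe}>-1/2$ is actually needed is a fair (slightly sharper) reading of the paper's implicit assumption that $\epsilon\in(-1/2,1/2)$.
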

\begin{proof}
Unrolling the update rule results in the following: for all $K'\in\naturals^{>0}$, we have: $\lambda_{K'+1}=\lambda_1 +\eta\sum_{n=1}^{K'}(\epsilon-l_n).$ By isolating $\sum_{n=1}^{K'}l_n$ on one side, moving all other terms to the right-hand side, and dividing by $\eta K'$, we obtain: $\frac{1}{K'}\sum_{n=1}^{K'}l_n=\epsilon+\frac{\lambda_1-\lambda_{K'+1}}{\eta  K'}.$ To conclude, we just need to show that  $\lambda_\mathit{safe}-\eta\leq\lambda_{K'+1}$. First note that the maximal change in $\lambda$ is $\sup_k|\lambda_{k+1}-\lambda_k|<\eta$, because $l_k\in(-1/2,1/2)$ and $\epsilon\in(-1/2,1/2)$.
We will then proceed by contradiction: assume there exists a $k$ such that $\lambda_k<\lambda_\mathit{safe}-\eta$ and denote the first step that goes below the bound by $k_*$, which is equal to  $\arg\min_{k\in\naturals^{>0}}\{k\mid\lambda_k<\lambda_\mathit{safe}-\eta\}$.
Then, $k_*>1$ and, by definition of $k_*$, $\forall k<k_*,\ \lambda_{k_*}<\lambda_\mathit{safe} -\eta\leq\lambda_k$.
Because the maximum difference between successive steps is $\eta$, we have $\lambda_{k_*-1}\leq\lambda_\mathit{safe}$.
By using the update rule $\lambda_{k_*}=\lambda_{k_*-1} +\eta(\epsilon-l_{k_*-1})$, we have:
\begin{align*}
    \lambda_{k_*-1}\leq\lambda_\mathit{safe}&\implies l_{k_*-1}\leq\epsilon_\mathit{safe} 
                \implies\epsilon-l_{k_*-1}\geq\epsilon-\epsilon_\mathit{safe}\\
    &\implies\lambda_{k_*}\geq\lambda_{k_*-1}.
\end{align*}
This contradicts the minimality of $k_*$.
\end{proof}

\section{Experimental Results}
\label{sec:result}
In this section, we describe the experimental framework used to evaluate the performance of our proposed conformal controller \hussein{and analyze its results.}

We implemented Algorithm 1 and the conformal update in Python \hussein{based on the code used for} the Stanford Drone Dataset experiment in 
\cite{lekeufack2024conformal}. SDD is a large dataset of \hussein{annotated 30 fps birds-eye-view videos}  
of various types of agents (pedestrian, bicyclist, skateboarder, car, bus, and golf cart) that navigate in different high traffic areas of the campus of Stanford University~\cite{SDD_2016}. 
\hussein{We used the trajectory} predictor \hussein{\texttt{darts}~\cite{darts_paper}, the same one used in \cite{lekeufack2024conformal}, which}   
for every scenario, frame, and agent on that frame, \hussein{predicts the positions of the agent for the next 40 frames.} 
As in \cite{lekeufack2024conformal}, \hussein{a virtual robot} 
is added to the scenarios, 
equipped with the \texttt{darts} predictor, and has to reach a specific goal \hussein{position} while avoiding the real positions of \hussein{surrounding dynamic} agents. In our \hussein{experiments,} 
we only consider pedestrians, and use a \hussein{CBF-based} 
controller instead of the \hussein{model predictive one used in \cite{lekeufack2024conformal} to achieve safety.}  
We \hussein{assume that our robot has}
double integrator dynamics and \hussein{use} the \hussein{artificial} potential fields-based CBF from \cite{Artificial_Potential_Fields} \hussein{for collision avoidance}. The state of the robot is a tuple $\mathbf{X}_i=(x,y,v_x,v_y)^\intercal$, where $X_i=(x,y)$ represents its position \hussein{and $v_i=(v_x,v_y)$ is its linear velocity}, and \hussein{its control input is the linear acceleration:}
$\mathbf{u}=(u_x,u_y)^\intercal$. \hussein{Its dynamics are defined as follows:}
    $\dot{\mathbf{X}}_i= (v_x,v_y,u_x,u_y)^\intercal.$
    \hussein{We consider the} reference control \hussein{to be the gradient of a virtual}  
attractive field centered \hussein{at} 
the goal: $\nabla U_{att}:=K_{att}*(X_i-X_{goal})$. \hussein{Following \cite{Artificial_Potential_Fields}, we define the CBF for a given surrounding agent $j$ as follows:} $h_j(\mathbf{X}_i)=\frac{1}{1+U_{rep,j}(X_i)}-\delta$, 
\hussein{where} $U_{rep,j}=\frac{K_{rep}}{2}\left(\frac{1}{\Vert X_i-X_j\Vert}-\frac{1}{\rho_0}\right)^2$ \hussein{is a virtual repulsive field centered at agent $j$'s position, and} 
$\delta$ and $\rho_0$ represent  the maximal safe repulsion (i.e., at the collision distance) and the sensing distance of the ego agent, respectively. Since the CBF only uses the distance $\Vert X_i-X_j\Vert$ and not the ego agent's speed, we \hussein{define} 
the QP problem \hussein{with the} 
desired safe speed $v_{QP}$ \hussein{as its optimization variable} and set $\mathbf{u}_{QP}=-K_{acc}(v_i - v_{QP})$ \hussein{as a  tracking controller of  $v_{QP}$, as in \cite{Artificial_Potential_Fields}. It has a separate safety constraint for every surrounding agent}.\par
\begin{table}[h]
\small
    \centering
    \begin{tabular}{|*{5}{c|}}
    \hline
    \multicolumn{1}{|c|}{Parameters} & \multicolumn{4}{|c|}{Metrics}\\ \hline
    $\epsilon$ & $t_{goal}$ & $n_{collide}$ & $d_{\min}$ & $l_{avg}$ \\ \hline
    -0.4 & 19.2 & 0 & 4.098 & -0.3949\\ \hline
	-0.2 & 19.2 & 0 & 4.098 & -0.1865\\ \hline
	0 & 19.2 & 4 & 0.9031 & 0.008016\\ \hline
	0.2 & 18.63 & 0 & 1.639 & 0.2008\\ \hline
	0.4 & 14.63 & 18 & 0.4813 & 0.4007\\ \hline
	pred/noLearn & 13.9 & 14 & 0.8311 & -\\ \hline
	gt/noLearn & 13.9 & 13 & 0.7764 & - \\ \hline
\end{tabular}
    \caption{gates\_1 scenario with $a=0.1$, $K_{acc}=2$, $K_{rep}=20$, $K_{att}=1$, $\rho_0=400$, $\tau=12$, and $\eta=100$.}
    \label{tab:eps}
\end{table}
There are several parameters we experimented with to evaluate the performance of our conformal controller. They can be divided into two groups: the controller's inherent parameters (the sensor's sampling time $\tau$, the extended class-$\mathcal{K}$ function $\alpha$, which we assume to be linear with slope $a$, as well as the potential fields parameters' $K_{att}$ and $K_{rep}$), and the learning parameters (the learning rate $\eta$, and the target loss $\epsilon$).
In order to assess the controller's performance, we used 
several key metrics: the time for reaching the goal $t_{goal}$,  the number of  frames at which the agent was at collision $n_{collide}$, the minimal distance between the ego agent and the pedestrians $d_{\min}$, and finally the average loss of our conformal controller $l_{avg}$.
Recall that our approach does not guarantee \hussein{collision avoidance,} 
but a bound on the \hussein{average} loss
$\hat{\dot{h}}_j+\alpha(\hat{h}_j)+\lambda-\dot{h}_j-\alpha(h_j)$ \hussein{that is a surrogate for the average number of CBF constraints violations.}
\hussein{We evaluate our controller against} two baseline controllers, "pred/noLearn" and "gt/noLearn". Both run Algorithm 1 without learning, i.e., with $\lambda$ \hussein{and $\eta$} equal to zero, \hussein{but one has access to} 
the same black-box predictor and \hussein{the other has access to the}
ground truth (i.e., a perfect predictor).\par
We first evaluate the influence of the target average loss $\epsilon$. Table I presents the performance of the controller for different $\epsilon$, fixing all other parameters. We see that \hussein{adapting $\lambda$ based on prediction errors}  
reduces the number of safety violations and increases the minimum distance overall. The average loss \hussein{in all experiments is very close to $\epsilon$, as intended.} \hussein{Moreover,} compared to the baselines, decreasing $\epsilon$ leads to \hussein{fewer collisions. We also observe that even with access to the ground truth future states of surrounding agents, "gt/noLearn" still led to 12 collision frames. That is because of tracking errors, as the QP solver outputs target safe velocities, which are then tracked by a proportional controller that generates accelerations, instead of generating the control input directly. Having negative $\epsilon$, which results in the CBF constraint being more restrictive than the ground truth one on average decreases $n_{\mathit{collide}}$ to zero. } 
\begin{table}[!h]
\small 
    \centering
    \begin{tabular}{|*{6}{c|}}
    \hline
    \multicolumn{2}{|c|}{Parameters} & \multicolumn{4}{|c|}{Metrics}\\ \hline
    $\tau$ & $\eta$ & $t_{goal}$ & $n_{collide}$ & $d_{\min}$ & $l_{avg}$ \\ \hline
    4 & 0.1 & 9.067 & 6 & 0.4059 & -0.2052\\ \hline
	4 & 1 & 9.1 & 2 & 0.8881 & -0.2415\\ \hline
	4 & noLearn & 8.933 & 13 & 0.4493 & -\\ \hline
	8 & 0.1 & 9.067 & 6 & 1.046 & -0.1949\\ \hline
	8 & 1 & 9.133 & 3 & 0.4257 & -0.2284\\ \hline
	8 & noLearn & 8.933 & 13 & 0.3877 & -\\ \hline
	gt & noLearn & 8.967 & 11 & 0.5087 & -\\ \hline
\end{tabular}
    \caption{nexus\_4 scenario with $a=1$, $K_{acc}=2$, $K_{rep}=40$, $K_{att}=1$, $\rho_0=400$, and $\epsilon=-0.25$.}
    \label{tab:tau_eta}
\end{table}
For the second experiment, we compare different values of $\tau$ and $\eta$ for a fixed \hussein{$\epsilon$.} 
The results in table \ref{tab:tau_eta} indicate that a higher learning rate and a lower sampling time result in fewer collisions and larger minimum distances. A large $\eta$ and a small $\tau$ mean that the agent is able to change $\lambda$ faster, leading to a lower loss. Note that Assumption~\ref{asm:feasibility} was only satisfied in this case  when $\eta$ was smaller than or equal to one, because the QP problem became infeasible at some time instances with larger $\eta$.

\section{Conclusion}
\label{sec:conclusion}
We presented an algorithm to adapt the CBF-based inter-agent collision-avoidance safety constraints in decentralized multi-agent settings according to prediction errors. Our adaptation aims to increase restrictiveness of the constraints when they turn out to be looser than intended by the user, and vice versa. Using conformal decision theory, we obtained an upper bound on the average value of a monotonic function of the difference between our constraints and the ground truth ones. Finally, we presented experimental results validating our theoretical results and comparing the effects of different hyperparameters.  
\clearpage

\bibliographystyle{IEEEtran}
\bibliography{references}
\end{document}